\documentclass[a4paper,10pt]{article}

\title{A Concrete Representation of Observational Equivalence for PCF}
\author{Martin Churchill, James Laird and Guy McCusker}
\date{March 2009}

\usepackage{stmaryrd} \usepackage{diagrams} \usepackage{bussproofs}

\usepackage{amsmath,amssymb, amsfonts} \usepackage{hyperref}
\usepackage{diagrams} \usepackage{bussproofs}
\usepackage{graphicx} \usepackage{pstricks,pst-node}
\usepackage{stmaryrd} \usepackage{verbatim} \usepackage{tikz}

\newtheorem{theorem}{Theorem}[section]

\newtheorem{proposition}[theorem]{Proposition}

\newenvironment{proof}[1][Proof]{\begin{trivlist}
\item[\hskip \labelsep {\bfseries #1}]}{\end{trivlist}}
\newenvironment{definition}[1][Definition]{\begin{trivlist}
\item[\hskip \labelsep {\bfseries #1}]}{\end{trivlist}}
\newenvironment{example}[1][Example]{\begin{trivlist}
\item[\hskip \labelsep {\bfseries #1}]}{\end{trivlist}}

\newcommand{\qed}{\nobreak \ifvmode \relax \else
  \ifdim\lastskip<1.5em \hskip-\lastskip \hskip1.5em plus0em
  minus0.5em \fi \nobreak \vrule height0.75em width0.5em
  depth0.25em\fi}

\begin{document}
\maketitle

\begin{abstract}
  A result on observational equivalence for PCF and innocent
  strategies, as presented at the \emph{Games for Logic and
    Programming Languages} (GaLoP) workshop in York, March 2009.

  The full abstraction result for PCF using game semantics requires
  one to identify all innocent strategies that are \emph{innocently
    indistinguishable}. This involves a quantification over all
  innocent tests, cf. quantification over all innocent contexts. Here
  we present a representation of innocent strategies that equates
  innocently indistinguishable ones, yielding a representation of PCF
  terms that equates precisely those terms that are observational
  equivalent.
\end{abstract}

\section{Introduction}

In recent years game semantics has provided an accurate model for
various programming languages, leading to the first full abstraction
results for a variety of languages and in a unified way
\cite{AMc_GS}. In such models programs are interpreted as
\emph{strategies}, i.e. highly constrained (sets of) processes; and
adding semantic power corresponds to weakening restrictions on such
sets.

An early success was providing a the first fully abstract model of PCF
\cite{AJM_PCF, HO_PCF}, answering a challenge first posed in
\cite{Plot_LCF}. In this case we need to restrict our strategies to
representing pure functions, using a construct known as
\emph{innocence} which states that the strategy isn't allowed to
depend on the entire history (state) but only part of it; a relevant
context. So on the one hand we are dealing with intensional processes;
but on the other hand we are dealing with pure functions (albeit only
the sequential ones). This conflict rears its head in the full
abstraction result for PCF, where observational equivalence only holds
when one identifies strategies that cannot be distinguished by an
innocent test.

\begin{example}
We shall define two candidate innocent strategies for addition,
$\mathsf{add_{LR}}$ and $\mathsf{add_{RL}}$ over the game $\mathbf{N}
\times \mathbf{N} \Rightarrow \mathbf{N}$, which evaluate their
arguments left-to-right and right-to-left respectively. Let $q$
represent the unique \textsf{O}-question in the game $\mathbf{N}$, and
$m$ and $n$ range over the natural numbers. Maximal plays of
$\mathsf{add_{LR}}$ are then of the form:

\[
  \begin{array}{cccccl}
   (\mathbf{N} & \times & \mathbf{N}) & \Rightarrow & \mathbf{N}  \\
  & & & & \Rnode{a}{q} & \mathsf{O} \\
  \Rnode{d}{q} & & & & & \mathsf{P} \\
  \Rnode{e}{m} & & & & & \mathsf{O}\\
  & & \Rnode{b}{q} & & & \mathsf{P} \\
  & & \Rnode{c}{n} & & & \mathsf{O} \\
  & & & & \Rnode{f}{m+n} & \mathsf{P} \\
  \end{array}
\ncarc[nodesep=1pt,arcangle=-20]{<-}{a}{f}
\ncarc[nodesep=1pt,arcangle=-20]{<-}{a}{b}
\ncarc[nodesep=1pt,arcangle=-80]{<-}{b}{c}
\ncarc[nodesep=1pt,arcangle=-80]{<-}{d}{e}
\ncarc[nodesep=1pt,arcangle=-20]{<-}{a}{d}
\]

\noindent Maximal plays of $\mathsf{add_{RL}}$ are of the form:

\[
  \begin{array}{cccccl}
   (\mathbf{N} & \times & \mathbf{N}) & \Rightarrow & \mathbf{N}  \\
  & & & & \Rnode{a}{q} & \mathsf{O} \\
  & & \Rnode{b}{q} & & & \mathsf{P} \\
  & & \Rnode{c}{n} & & & \mathsf{O} \\
  \Rnode{d}{q} & & & & & \mathsf{P} \\
  \Rnode{e}{m} & & & & & \mathsf{O}\\
  & & & & \Rnode{f}{m+n} & \mathsf{P} \\
  \end{array}
\ncarc[nodesep=1pt,arcangle=-20]{<-}{a}{f}
\ncarc[nodesep=1pt,arcangle=-20]{<-}{a}{b}
\ncarc[nodesep=1pt,arcangle=-80]{<-}{b}{c}
\ncarc[nodesep=1pt,arcangle=-80]{<-}{d}{e}
\ncarc[nodesep=1pt,arcangle=20]{<-}{a}{d}
\]
\end{example}

We note that the strategies $\mathsf{add_{RL}}$ and
$\mathsf{add_{LR}}$ are not equal. However they are not
distinguishable by any innocent test --- for any innocent strategy
$\alpha : (\mathbb{N} \times \mathbb{N} \Rightarrow \mathbb{N})
\Rightarrow \Sigma$ we have $\mathsf{add_{LR}};\alpha =
\mathsf{add_{RL}};\alpha$. If we identify such
innocently-indistinguishable strategies, we factor out information
such as the number of times and order arguments are interrogated,
which are irrelevant details in a functional world; and it is with
respect to this identification that the full abstraction results for
PCF in \cite{AJM_PCF, HO_PCF} hold. However, quantifying over all
innocent strategies doesn't seem far from quantifying over all
innocent contexts, so it would be better if we could get a more
concrete handle on this observational preorder. Loader's result
\cite{Loa_PCF} places some restrictions on this: it was shown that
observational equivalence of PCF and finite base types is
undecidable. But nonetheless a more concrete presentation can be
given: here we introduce a candidate concrete representation of
innocent strategies (or PCF terms), and we define a map from innocent
strategies into this structure. This map identifies precisely those
strategies that are observationally equivalent. We believe this can be
used to construct a fully abstract model of PCF explicitly, with no
need of such a quotienting.

\section{Main Result}

\subsection{Views and Duality}

We recall standard definitions from game semantics of arena, justified
sequence, play, function space, strategy etc from
e.g. \cite{AMc_GS}. In particular we recall the definition of
\emph{\textsf{O}-view} and \emph{\textsf{P}-view}:

\begin{definition}
We define the \emph{\textsf{P}-view} of a play by
\begin{itemize}
 \item $\ulcorner \epsilon \urcorner = \epsilon$
 \item $\ulcorner s p \urcorner = \ulcorner s \urcorner p$ where $p$
   is a \textsf{P}-move
 \item $\ulcorner s i \urcorner = i$ where $i$ is an initial move
 \item $\ulcorner s p t o \urcorner = \ulcorner s \urcorner p o$,
   where \textsf{P}-move $p$ is the justifier of \textsf{O}-move $o$
\end{itemize}
We define the \emph{\textsf{O}-view} of a play
\begin{itemize}
 \item $\llcorner \epsilon \lrcorner = \epsilon$
 \item $\llcorner s o \lrcorner = \llcorner s \lrcorner o$ where $o$
   is an \textsf{O}-move
 \item $\llcorner s o t p \lrcorner = \llcorner s \lrcorner o p$,
   where \textsf{O}-move $o$ justifies \textsf{P}-move $p$
\end{itemize}
\end{definition}

\noindent We also recall the definition of the game $\Sigma =
(\{q,a\},\{q \mapsto OQ, a \mapsto PA\},\{\ast \vdash q \vdash a \},
\{\epsilon, q, qa \})$ and note that there are two strategies on this
game, $\top = \{ \epsilon, qa \}$ and $\bot = \{ \epsilon \}$. The
game $\Sigma$ allows us to note a duality between \textsf{O}-views and
\textsf{P}-views, since a single-threaded play in $A \rightarrow
\Sigma$ consists of a play in $A$ with the roles of \textsf{P} and
\textsf{O} reversed. This is useful to us because of the following
lemma:

\begin{proposition}
If $s$ is an \textsf{O}-view in the game $A$ then $qs$ is a
\textsf{P}-view in $A \rightarrow \Sigma$. If $qs$ is \textsf{P}-view
in $A \rightarrow \Sigma$ and $s$ is a play in $A$, then $s$ is an
\textsf{O}-view in $A$.
\label{Opviews}
\end{proposition}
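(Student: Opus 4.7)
The plan is to prove both statements simultaneously via a stronger equality: for every play $s$ of $A$, we have $\ulcorner qs \urcorner_{A \rightarrow \Sigma} = q \cdot \llcorner s \lrcorner_A$. Once this is established, both directions of the proposition drop out: $s$ is an \textsf{O}-view in $A$ iff $\llcorner s \lrcorner_A = s$ iff $q \cdot \llcorner s \lrcorner_A = qs$ iff $\ulcorner qs \urcorner_{A \rightarrow \Sigma} = qs$ iff $qs$ is a \textsf{P}-view in $A \rightarrow \Sigma$.

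First I would set up the duality explicitly. In $A \rightarrow \Sigma$, the moves of $\Sigma$ retain their polarity (so $q$ is initial and \textsf{O}, $a$ is \textsf{P}), while every move of $A$ flips polarity because $A$ sits on the left of the arrow. Justification pointers inside $A$ are unchanged, and every initial move of $A$ becomes justified by the unique occurrence of $q$ (which is the sole initial move of $A \rightarrow \Sigma$). Thus an \textsf{O}-move $o$ of $A$ is a \textsf{P}-move of $A \rightarrow \Sigma$, and a \textsf{P}-move $p$ of $A$ is an \textsf{O}-move of $A \rightarrow \Sigma$ whose justifier (in $A \rightarrow \Sigma$) is exactly its $A$-justifier, which is now a \textsf{P}-move.

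Next I would run induction on the length of $s$. The base case $s = \epsilon$ reduces to $\ulcorner q \urcorner = q = q \cdot \llcorner \epsilon \lrcorner$, using the clause for initial moves. For the inductive step, split on the polarity of the last move $m$ of $s$ in $A$. If $m$ is an \textsf{O}-move of $A$, then $m$ is a \textsf{P}-move of $A \rightarrow \Sigma$, so the \textsf{P}-move clause of the \textsf{P}-view definition applies, giving $\ulcorner qs'm \urcorner = \ulcorner qs' \urcorner \cdot m$, and the \textsf{O}-move clause of the \textsf{O}-view definition gives $\llcorner s'm \lrcorner = \llcorner s' \lrcorner \cdot m$; the induction hypothesis then completes this case. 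If $m$ is a \textsf{P}-move of $A$, write $s = s'' p t m$ where $p$ is the $A$-justifier of $m$; then $p$ is an \textsf{O}-move in $A$ that justifies \textsf{P}-move $m$, and simultaneously $p$ is a \textsf{P}-move in $A \rightarrow \Sigma$ justifying the \textsf{O}-move $m$. So the fourth clause of the \textsf{P}-view definition applies to $qs'' p t m$ and the third clause of the \textsf{O}-view definition applies to $s'' p t m$, and again the two equations line up under the induction hypothesis.

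The main obstacle is bookkeeping rather than mathematical depth: one must verify that each clause used on one side has its exact dual counterpart firing on the other side, which hinges on the claim that justification is preserved under the \mbox{$A \rightsquigarrow A \rightarrow \Sigma$} embedding and that the polarity flip exchanges the \textsf{P}-view and \textsf{O}-view clauses verbatim. Once these correspondences are laid out, the second half of the proposition (the converse direction) is automatic because the computation of $\ulcorner qs \urcorner$ only uses prefixes of $s$, so the hypothesis that $s$ is a play in $A$ is enough to legitimise every appeal to the \textsf{O}-view clauses.
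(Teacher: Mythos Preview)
Your argument is correct. You establish the identity $\ulcorner qs\urcorner_{A\rightarrow\Sigma}=q\cdot\llcorner s\lrcorner_A$ by induction on $s$ and then read off both directions of the proposition from the fixed-point characterisations ``$s$ is an \textsf{O}-view iff $\llcorner s\lrcorner=s$'' and ``$qs$ is a \textsf{P}-view iff $\ulcorner qs\urcorner=qs$''. The induction is sound; in particular, in the case where the last move $m$ is a \textsf{P}-move of $A$ you correctly note that $m$ cannot be initial in $A$, so its $A$-justifier exists and coincides with its justifier in $A\rightarrow\Sigma$, which is what makes the fourth \textsf{P}-view clause and the third \textsf{O}-view clause fire in lockstep.

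This is, however, a genuinely different route from the paper. The paper proves this proposition by a one-line structural characterisation: \textsf{O}-views in $A$ are exactly the alternating sequences $o_1p_1o_2p_2\ldots$ in which each $p_i$ is justified by the immediately preceding $o_i$, and \textsf{P}-views in $A\rightarrow\Sigma$ lying entirely in $A$ (after the initial $q$) have precisely the same shape once the polarity flip is taken into account. No induction is needed. What you have actually proved is the content of the paper's \emph{next} proposition (Proposition~\ref{ovS}), and then derived Proposition~\ref{Opviews} as a corollary. Your approach has the advantage of yielding both results at once and of making the duality computational rather than merely structural; the paper's approach has the advantage of being shorter for this particular statement and of not presupposing the fixed-point description of views.
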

\begin{proof}
\textsf{O}-views in $A$ are precisely the plays of the form
$o_1p_1o_2p_2o_3p_3 \ldots o_n(p_n)$ where the justifier of $p_i$ is
$o_i$. A \textsf{P}-view in $A \rightarrow \Sigma$ ending in $A$ must
be of the form $qp_1o_1p_2o_2 \ldots p_n(o_n)$ with each $o_i$
justified by the preceding $p_i$, and both are a move in $A$. Then
$p_1o_1 \ldots p_n(o_n)$ is a play in $A$ and since the parities are
reversed this is of the form $o_1p_1 \ldots o_n(p_n)$ with each $p_i$
justified by the preceding $o_i$, hence a \textsf{O}-view in
$A$. Clearly also any such \textsf{O}-view in $A$ yields a
\textsf{P}-view in $A \rightarrow \Sigma$ in this manner. \qed
\end{proof}

\begin{proposition}
Let $q_\Sigma s$ be a play in $A \rightarrow \Sigma$ ending in
$A$. Then $\ulcorner q_\Sigma s \urcorner = q_\Sigma \llcorner s
\lrcorner$, in the latter case taking the view with respect to the
arena $A$.
\label{ovS}
\end{proposition}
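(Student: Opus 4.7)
The plan is to proceed by induction on the length of $s$. The base case $s = \epsilon$ is immediate: $\ulcorner q_\Sigma \urcorner = q_\Sigma$ by the initial-move clause for \textsf{P}-views, while $q_\Sigma \llcorner \epsilon \lrcorner = q_\Sigma$ by the empty-play clause for \textsf{O}-views.

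For the inductive step write $s = s' m$, where $m$ lies in $A$ (by the alternation in $A \rightarrow \Sigma$ and the hypothesis that the play ends in $A$). I would split on the polarity of $m$ in $A \rightarrow \Sigma$, using the fact that $m \notin \{q_\Sigma, a_\Sigma\}$ so its polarity in $A \rightarrow \Sigma$ is the opposite of its polarity in $A$. If $m$ is a \textsf{P}-move in $A \rightarrow \Sigma$ (i.e., an \textsf{O}-move in $A$), then $\ulcorner q_\Sigma s' m \urcorner = \ulcorner q_\Sigma s' \urcorner m$ and $\llcorner s' m \lrcorner = \llcorner s' \lrcorner m$ directly, and the induction hypothesis finishes the case.

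If instead $m$ is an \textsf{O}-move in $A \rightarrow \Sigma$ (i.e., a \textsf{P}-move in $A$), then its justifier $p$ in $A \rightarrow \Sigma$ cannot be $q_\Sigma$ (whose enabled moves become \textsf{P}-moves of $A \rightarrow \Sigma$), so $p$ lies in $A$, is a \textsf{P}-move in $A \rightarrow \Sigma$, an \textsf{O}-move in $A$, and is exactly the justifier of $m$ in $A$ since the enabling relation on $A$-moves is inherited from $A$. Decomposing $s' = u p \tau$, the fourth clause of the \textsf{P}-view gives $\ulcorner q_\Sigma u p \tau m \urcorner = \ulcorner q_\Sigma u \urcorner p m$, and the third clause of the \textsf{O}-view gives $\llcorner u p \tau m \lrcorner = \llcorner u \lrcorner p m$. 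The prefix $q_\Sigma u$ is again a play in $A \rightarrow \Sigma$ ending in the $A$-move $p$, so the induction hypothesis applies and yields $\ulcorner q_\Sigma u \urcorner = q_\Sigma \llcorner u \lrcorner$, closing the case.

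The only real subtlety — and the step I would be most careful about — is the justifier bookkeeping in the second case: namely verifying that the justifier of an \textsf{O}-move of $A \rightarrow \Sigma$ situated inside $A$ is never $q_\Sigma$, and that the ``natural'' embedding of plays in $A$ into plays in $A \rightarrow \Sigma$ preserves pointers on $A$-moves, so that the two decompositions $s' = u p \tau$ coincide whether we read $s'$ as a fragment of a play in $A$ or as a fragment of a play in $A \rightarrow \Sigma$. Everything else is a routine unfolding of the view clauses guided by Proposition~\ref{Opviews}.
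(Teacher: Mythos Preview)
Your proof is correct and follows the same inductive strategy as the paper's own argument; the only cosmetic difference is your choice of base case ($s = \epsilon$, whereas the paper starts at $s = q_A$), and your extra care with the justifier bookkeeping is if anything more thorough than the paper. One small slip: in the second inductive case you write that $q_\Sigma u$ ``ends in the $A$-move $p$'', but of course it ends just \emph{before} $p$ --- this is harmless, since your base case covers $u = \epsilon$ and otherwise $u$ must end in an $A$-move (the only $\Sigma$-move being the leading $q_\Sigma$), so the induction hypothesis still applies.
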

\begin{proof}
Induction on $s$. Base case $s = q_A$. Then $\ulcorner q_\Sigma q_A
\urcorner = q_\Sigma q_A = q_\Sigma \llcorner q_A \lrcorner$.

Inductive step --- if $s = s' p$ then $\ulcorner q_\Sigma s' p
\urcorner = \ulcorner q_\Sigma s' \urcorner p = q_\Sigma \llcorner s'
\lrcorner p = q_\Sigma \llcorner s' p \lrcorner$ since $p$ is an
\textsf{O}-move in the game $A$. If $s = s' p s'' o$ with $p$
justifying $o$ then $\ulcorner q_\Sigma s' p s'' o \urcorner =
\ulcorner q_\Sigma s' \urcorner p o = q_\Sigma \llcorner s' \lrcorner
p o = q_\Sigma \llcorner s' p s'' o \lrcorner$ since $(p,o)$ are
(\textsf{O},\textsf{P}) moves respectively in the game $A$. \qed
\end{proof}

\noindent In particular we will use this to note that innocent
strategies for $A \rightarrow \Sigma$ (i.e. innocent tests for $A$)
coincide with \textsf{O}-view functions on $A$. Further it is known
\cite{McC_FPC} that Linear Tests Suffice, so we only need consider such
\textsf{O}-view functions that deal with a single thread.

\begin{definition}
A set $S$ of well-bracketed \textsf{O}-views over an arena $A$ is
\emph{\textsf{O}-deterministic} if $so_1, so_2 \in S$ implies $o_1 =
o_2$, each $s \in S$ is single-threaded, each $s \in S$ begins with
the same initial move.
\end{definition}

\begin{definition}
If $S$ is an \textsf{O}-deterministic set over the arena $A$, we can
define the innocent strategy $\alpha_S : A \rightarrow \Sigma$ as a
\textsf{P}-view function $\alpha_S = \{ (q_\Sigma s, p) : sp \in S \}
\cup \{ (q_\Sigma t, a_\Sigma) : t \in S \wedge \mathsf{complete}(t)
\}$
\end{definition}

\noindent The above uses the observation that if $qs$ is a
\textsf{P}-view in $A \rightarrow \Sigma$ that does not end in
$a_\Sigma$ then $s$ is an \textsf{O}-view in $A$. We note that such
strategies yield well-bracketed plays since the \textsf{O}-views in
$S$ are well-bracketed, hence are the combination of
\textsf{P}-view/move pairs found in $\alpha_S$. We will soon show that
innocent tests on $A$ of the form $\alpha_S$ are the only ones needed
to distinguish two observationally inequivalent strategies; where
observational inequivalence comes from the following preorder:

\begin{definition}
Let $\sigma, \tau : A$ be innocent strategies. We write $\sigma
\leq_{ib} \tau$ if for any innocent $\alpha : A \rightarrow \Sigma$ if
$\sigma;\alpha = \top$ then $\tau;\alpha = \top$.
\end{definition}

\subsection{\textsf{O}-view Sets}

We shall now use some of these ideas to show that two innocent
strategies are observationally equivalent if and only if their sets of
\textsf{O}-views of prefixes of complete single-threaded plays are the
same.

\begin{definition}
A play $s$ is \emph{\textsf{O}-innocent} if for $s_1o_1, s_2o_2
\sqsubseteq s$ with $\llcorner s_1 \lrcorner = \llcorner s_2
\lrcorner$ and $o_1, o_2$ \textsf{O}-moves, we have $o_1 = o_2$. A
play $s$ is \emph{\textsf{P}-innocent} if for $s_1p_1, s_2p_2
\subseteq s$ with $\ulcorner s_1 \urcorner = \ulcorner s_2 \urcorner$
and $p_1, p_2$ \textsf{P}-moves, we have $p_1 = p_2$.
\end{definition}

\noindent We note that in a world of innocent strategies alone, a
strategy is equivalent to its set of \textsf{O}-innocent traces (since
after all, these are the only traces that can be ``realised'' by an
innocent opponent). It is also clear that all plays in an innocent
strategy are \textsf{P}-innocent.

\begin{definition}
Given a play $s$, define $\mathsf{ovw}(s) = \{ \llcorner t \lrcorner :
t \sqsubseteq s \}$.
\end{definition}

\begin{proposition}
If $s$ is a single-threaded \textsf{O}-innocent complete play,
$\mathsf{ovw}(s)$ is \textsf{O}-deterministic.
\label{Odet}
\end{proposition}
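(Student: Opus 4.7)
The plan is to verify each of the three clauses in the definition of \textsf{O}-determinism in turn, treating the last (the determinism clause itself) as the substantive one.

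First I would handle the two easier conditions. For the ``same initial move'' clause, note that because $s$ is single-threaded, every prefix $t \sqsubseteq s$ begins with the unique initial move of $s$; since the recursive definition of the \textsf{O}-view preserves the initial move of a play, every element of $\mathsf{ovw}(s)$ begins with that same initial move. For the ``single-threaded'' clause, each $\llcorner t \lrcorner$ contains at most one occurrence of any initial move (again from the recursive definition), so it is single-threaded. Well-bracketedness of the \textsf{O}-views follows from well-bracketedness of $s$ and the fact that \textsf{O}-views of well-bracketed plays are well-bracketed (a standard observation, since taking the \textsf{O}-view preserves the matching of questions and answers in the retained moves).

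The main step is the determinism clause: if $s'o_1, s'o_2 \in \mathsf{ovw}(s)$ with $o_1, o_2$ \textsf{O}-moves, I want $o_1 = o_2$. By definition, there exist prefixes $t_1, t_2 \sqsubseteq s$ with $\llcorner t_1 \lrcorner = s'o_1$ and $\llcorner t_2 \lrcorner = s'o_2$. Inspecting the defining clauses of $\llcorner - \lrcorner$, a view ending in an \textsf{O}-move $o$ can only arise from the clause $\llcorner u\, o \lrcorner = \llcorner u \lrcorner o$, so $t_i$ itself must end in $o_i$; write $t_i = u_i o_i$ with $\llcorner u_i \lrcorner = s'$. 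Now both $u_1 o_1$ and $u_2 o_2$ are prefixes of $s$ with $\llcorner u_1 \lrcorner = \llcorner u_2 \lrcorner$, so \textsf{O}-innocence of $s$ gives $o_1 = o_2$, as required.

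The only mild subtlety is the observation that an \textsf{O}-view ending in an \textsf{O}-move must come from a play whose last move is that same \textsf{O}-move (rather than from the ``$s p t o$'' clause, which would have the last move be a \textsf{P}-move). This follows immediately from the case analysis on the definition, so the argument is essentially a direct unpacking of definitions combined with a single appeal to the \textsf{O}-innocence hypothesis; I do not anticipate any real obstacle.
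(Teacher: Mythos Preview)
Your argument is correct and follows essentially the same route as the paper: verify the single-threaded and same-initial-move clauses by noting that \textsf{O}-views of prefixes of a single-threaded play inherit these properties, and for the determinism clause observe that an \textsf{O}-view ending in an \textsf{O}-move must arise from a prefix ending in that same move, so that \textsf{O}-innocence of $s$ applies directly. If anything you are slightly more careful than the paper, since you explicitly note the well-bracketedness requirement and spell out the case analysis on the \textsf{O}-view clauses that the paper leaves as ``in practice we know''.
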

\begin{proof}
Suppose $so_1, so_2 \in \mathsf{ovw}(s)$. Then $so_1, so_2 = \llcorner
s'_1 \lrcorner, \llcorner s'_2 \lrcorner$. In practice we know that
$\llcorner s'_1 \lrcorner = \llcorner s_1 \lrcorner o_1$ and
$\llcorner s'_2 \lrcorner = \llcorner s_2 \lrcorner o_2$ with
$\llcorner s_1 \lrcorner = s = \llcorner s_2 \lrcorner$. But then
$s_1, s_2 \sqsubseteq s$ so $o_1 = o_2$ by \textsf{O}-innocence of
$s$.

We know that each $s' \in \mathsf{ovw}(s)$ is single-threaded, since
the \textsf{O}-view of a prefix of a single-threaded play is also
single-threaded.

We know that each $s' \in \mathsf{ovw}(s)$ begins with the same
initial move, since each $s'$ is the \textsf{O}-view of a prefix of
$s$ and as such must begin with the initial move of $s$ (since the
\textsf{O}-view of a play contains its first move). \qed
\end{proof}

\noindent We can now substantiate our remark above regarding
observational equivalence:

\begin{proposition}
$\sigma \leq_{ib} \tau$ iff for any \textsf{O}-deterministic set $S$
  on $A$ we have $\sigma;\alpha_S = \top$ implies $\tau;\alpha_S =
  \top$
\label{obseqv}
\end{proposition}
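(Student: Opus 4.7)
The plan is to prove each direction separately. The forward direction is essentially immediate: each $\alpha_S$ is, by construction, an innocent strategy on $A \rightarrow \Sigma$, so the general statement of $\sigma \leq_{ib} \tau$ specialises to the claimed condition. The substantive direction is the converse, which requires showing that an arbitrary innocent test $\alpha$ distinguishing $\sigma$ and $\tau$ can be replaced by some $\alpha_S$ that does the same.

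For the converse, I would assume the right-hand condition and pick an innocent $\alpha : A \rightarrow \Sigma$ with $\sigma;\alpha = \top$. This composition is witnessed by some single-threaded complete play $s$ in $A$ (with $\alpha$ eventually responding $a_\Sigma$ to close it off). First I would observe that $s$ is \textsf{O}-innocent: by Proposition~\ref{ovS}, the \textsf{P}-views of $\alpha$ during the interaction are of the form $q_\Sigma \llcorner t \lrcorner$ for prefixes $t \sqsubseteq s$, and since $\alpha$ is innocent these \textsf{P}-views determine its responses; so \textsf{O}-views in $s$ determine the \textsf{O}-moves in $A$, which is exactly \textsf{O}-innocence. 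Setting $S := \mathsf{ovw}(s)$, Proposition~\ref{Odet} then gives that $S$ is \textsf{O}-deterministic, so $\alpha_S$ is well-defined.

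The main technical step is then to show that $\alpha_S$ mimics $\alpha$ well enough to inherit its distinguishing power. Concretely, I would prove the key agreement lemma: for every \textsf{O}-view $t$ actually realised as $\llcorner t' \lrcorner$ for some $t' \sqsubseteq s$, the moves $\alpha_S$ prescribes after $q_\Sigma t$ agree with those of $\alpha$, because both follow the extension of $t$ dictated by $s$ (and both answer $a_\Sigma$ when $t$ itself is complete in $S$). This immediately gives $\sigma;\alpha_S = \top$: the interaction replays the very play $s$, since at each point $\sigma$'s \textsf{P}-moves are unchanged and $\alpha_S$ supplies the same \textsf{O}-moves $\alpha$ did, culminating in the acceptance move.

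Finally I would use the hypothesis to get $\tau;\alpha_S = \top$, witnessed by some complete play $s'$ in $A$. Here the crucial point is that, by construction of $\alpha_S$, every \textsf{O}-view arising in $s'$ must lie in $S = \mathsf{ovw}(s)$; otherwise $\alpha_S$ would have no response. But on every \textsf{O}-view in $S$, the agreement lemma says $\alpha$ and $\alpha_S$ coincide. Hence the same play $s'$ is also a successful interaction of $\tau$ with $\alpha$, giving $\tau;\alpha = \top$ as required. The main obstacle I anticipate is formulating the agreement lemma carefully enough that it applies both when replaying $s$ (to get $\sigma;\alpha_S = \top$) and when lifting $s'$ back from $\alpha_S$ to $\alpha$; the rest of the argument is bookkeeping about views via Propositions~\ref{Opviews} and~\ref{ovS}.
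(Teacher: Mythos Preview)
Your overall structure is the same as the paper's: both directions hinge on constructing $\alpha_S$ with $S=\mathsf{ovw}(s)$ from a witnessing interaction $s$, and your ``agreement lemma'' is exactly the content of the paper's observation that $\alpha_{\mathsf{ovw}(s)} \subseteq \alpha$ (from which both $\sigma;\alpha_S=\top$ and the transfer to $\tau$ follow by monotonicity of composition). The paper runs the converse by contrapositive rather than directly, but that is only a cosmetic difference.

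There is, however, one genuine gap. You assert that the interaction $\sigma;\alpha=\top$ ``is witnessed by some single-threaded complete play $s$ in $A$'', but this is not automatic: an innocent $\alpha:A\rightarrow\Sigma$ may interrogate its argument several times, so the $A$-component of the witnessing play can be multi-threaded. In that case $\mathsf{ovw}(s)$ need not be \textsf{O}-deterministic (the definition requires all elements to be single-threaded and share an initial move), and Proposition~\ref{Odet} does not apply. The paper closes this gap by invoking the \emph{Linear Tests Suffice} result of \cite{McC_FPC}: it is enough to test against innocent $\alpha$ whose plays restrict to single-threaded plays in $A$, and for such $\alpha$ the witness $s$ is single-threaded as you need. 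Once you add that step at the point where you pick $\alpha$, your argument goes through; without it the construction of $\alpha_S$ is not justified.
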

\begin{proof}
Clearly if $\sigma \leq_{ib} \tau$ the RHS holds by innocence of
$\alpha_S$.

Conversely, if $\sigma \leq_{ib} \tau$ does not hold then we have
$\alpha$ such that $\sigma;\alpha = \top$ and $\tau;\alpha = \bot$. By
Linear Tests Suffice we may assume that $\alpha$ consists only of
plays that interrogate their argument once, i.e. plays which are
single-threaded when restricted to $A$. Thus we have an interaction
sequence $s$ with $s = q_\Sigma s' a_\Sigma \in \alpha$ with $s' \in
\sigma$. Since $s \in \alpha_S$ and $\alpha_S$ is innocent we know $s$
must be \textsf{P}-innocent. By \ref{Opviews} it follows that $s'$
must be \textsf{O}-innocent. Further we know $s'$ is single-threaded
and complete (by well-bracketedness), and so $\mathsf{ovw}(s')$ is
\textsf{O}-deterministic by \ref{Odet}. Also, it is clear that $s \in
\alpha_{\mathsf{ovw}(s')}$. Thus we have
$\sigma;\alpha_{\mathsf{ovw}(s')} = \top$. Since
$\alpha_{\mathsf{ovw}(s')} \subseteq \alpha$ and $\tau;\alpha = \bot$
it follows that $\tau;\alpha_{\mathsf{ovw}(s')} = \bot$ since
composition is monotonic. Hence RHS does not hold in the case that $S
= \mathsf{ovw}(s')$. \qed
\end{proof}

\noindent We now formally define the set of observations over a
strategy $\sigma$, as the \textsf{O}-views of the prefixes of the
complete, single-threaded, \textsf{O}-innocent plays.

\begin{definition}
Given an innocent strategy $\sigma$, define $\mathsf{obs}(\sigma) = \{
\mathsf{ovw}(s) : s \in \sigma \wedge \mathsf{complete}(s) \wedge
\mathsf{Oinnocent}(s) \wedge \mathsf{singlethreaded}(s) \}$
\end{definition}

\noindent We thus have two constructions, $\mathsf{obs}$ that takes an
innocent strategy and returns a set of \textsf{O}-view sets, and $S
\mapsto \alpha_S$ which takes an \textsf{O}-deterministic set and
returns an innocent strategy. We can relate these constructions.

\begin{proposition}
  Let $S$ be an \textsf{O}-deterministic set on $A$ and $\sigma$ and
  innocent strategy on $A$. Then $\sigma;\alpha_S = \top$ if and only
  if $S \supseteq T \in \mathsf{obs}(\sigma)$
\label{topp}
\end{proposition}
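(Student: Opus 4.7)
The plan is to read $\sigma;\alpha_S = \top$ as the existence of a (linear, by Linear Tests Suffice) interaction sequence $q_\Sigma s' a_\Sigma$ along which $\sigma$ and $\alpha_S$ play consistently, and to use Proposition~\ref{ovS} throughout to identify $\alpha_S$'s responses with extensions of $\textsf{O}$-views in $S$.

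For the $(\Leftarrow)$ direction, I would fix $T = \mathsf{ovw}(s) \subseteq S$ witnessed by some complete, single-threaded, $\textsf{O}$-innocent $s \in \sigma$ and verify that the interaction $q_\Sigma s a_\Sigma$ is consistent with both strategies. The $\sigma$-moves are valid since $s \in \sigma$. At each $\alpha_S$-turn after a prefix $u$ of $s$, the current $\textsf{P}$-view in $A \to \Sigma$ is $q_\Sigma \llcorner u \lrcorner$ by Proposition~\ref{ovS}, and the next $\textsf{O}$-move-in-$A$ $o$ supplied by $s$ satisfies $\llcorner u o \lrcorner = \llcorner u \lrcorner o \in \mathsf{ovw}(s) \subseteq S$, so $(q_\Sigma \llcorner u \lrcorner, o) \in \alpha_S$ by definition. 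When $s$ is exhausted, $\llcorner s \lrcorner \in T \subseteq S$ is complete (inheriting completeness from $s$), so $\alpha_S$ legitimately plays $a_\Sigma$ and the composite is $\top$.

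For $(\Rightarrow)$, the interaction produces $s' \in \sigma$. I would argue that $s'$ is single-threaded (since elements of $S$ are, and hence all $\textsf{O}$-moves played by $\alpha_S$ stay in one thread); complete (since $\alpha_S$'s terminating $a_\Sigma$ requires $\llcorner s' \lrcorner$ complete); and $\textsf{O}$-innocent, by the same reasoning as in Proposition~\ref{obseqv} --- $\alpha_S$ is a $\textsf{P}$-view function, hence via Proposition~\ref{ovS} it responds deterministically to each $\textsf{O}$-view, so two prefixes of $s'$ sharing an $\textsf{O}$-view elicit the same next $\textsf{O}$-move. Thus $T := \mathsf{ovw}(s') \in \mathsf{obs}(\sigma)$. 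To see $T \subseteq S$, an $\textsf{O}$-view $\llcorner u \lrcorner$ for $u \sqsubseteq s'$ ending in an $\textsf{O}$-move is $\llcorner u_{-1} \lrcorner o$, directly witnessed in $S$ by $\alpha_S$'s play of $o$; the $\textsf{O}$-view $\llcorner s' \lrcorner$ itself lies in $S$ by the $a_\Sigma$-clause of $\alpha_S$; and an $\textsf{O}$-view of a non-terminal prefix $u$ ending in a $\textsf{P}$-move appears as an $\textsf{O}$-view prefix of the preceding case (namely $\llcorner u o' \lrcorner = \llcorner u \lrcorner o' \in S$ for $\alpha_S$'s subsequent response $o'$), and hence lies in $S$ by prefix-closure.

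The main technical obstacle is the last case of $(\Rightarrow)$: showing $\llcorner u \lrcorner \in S$ for prefixes ending in $\textsf{P}$-moves whose $\textsf{O}$-views are not directly the $\textsf{O}$-view extension chosen by some $\alpha_S$-move but only a prefix of one. This appeals to $\textsf{O}$-view prefix-closure of $S$, which is automatic for sets of the form $\mathsf{ovw}(\cdot)$ and is the natural implicit convention for the $\textsf{O}$-deterministic sets under consideration.
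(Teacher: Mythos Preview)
Your proof follows essentially the same route as the paper's: in the $(\Leftarrow)$ direction you build the interaction $q_\Sigma s a_\Sigma$ from a witness $s$ with $\mathsf{ovw}(s)=T\subseteq S$ and verify it lies in $\alpha_S$ via Proposition~\ref{ovS}, and in the $(\Rightarrow)$ direction you extract $s'$ from the interaction, argue it is complete, single-threaded and \textsf{O}-innocent, and set $T=\mathsf{ovw}(s')$. You are in fact more explicit than the paper about the inclusion $\mathsf{ovw}(s')\subseteq S$ for \textsf{O}-views of prefixes ending in \textsf{P}-moves --- the paper simply jumps from ``$\llcorner t\lrcorner o\in S$ for each $to\sqsubseteq s$'' to ``$\mathsf{ovw}(s)\subseteq S$'' without comment, so your appeal to prefix-closure is addressing a point the paper glosses over rather than introducing a new obstacle; one small correction is that completeness of $s'$ follows directly from well-bracketedness of $q_\Sigma s' a_\Sigma$ (since $a_\Sigma$ answers $q_\Sigma$, all intervening questions must already be closed), not merely from completeness of $\llcorner s'\lrcorner$.
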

\begin{proof}
Suppose $\sigma;\alpha_S = \top$. Then exists interaction sequence $q
s a$ with $s \in \sigma$ complete; and such that if $to \sqsubseteq s$
then $\llcorner t \lrcorner o \in S$. Thus $\mathsf{ovw}(s) \subseteq
S$. But $\mathsf{ovw}(s) \in \mathsf{obs}(\sigma)$ since $s \in
\sigma$ is complete (well-bracketedness), \textsf{O}-innocent (since
$qsa \in \alpha_S$ is \textsf{P}-innocent), and single-threaded (since
$S$ is \textsf{O}-deterministic) so $S \supseteq \mathsf{ovw}(s) \in
\mathsf{obs}(\sigma)$ as required.

Conversely, if $S \supseteq T \in \mathsf{obs}(\sigma)$ then $T =
\mathsf{ovw}(t)$ for some complete, \textsf{O}-innocent,
single-threaded play $t \in \sigma$. Consider the play $q_\Sigma t
a_\Sigma$ in $A \rightarrow \Sigma$. To show that $\sigma;\alpha_S =
\top$ it will suffice to show that $q_\Sigma t a_\Sigma \in
\alpha_S$. To see this we need to check that for all $t'$ with
$q_\Sigma t' p \sqsubseteq t$, $(\ulcorner q_\Sigma t' \urcorner, p)
\in \alpha_S$ where $t'$ is an even length sequence. If $p = a_\Sigma$
then we must have $t' = t$ and \ref{ovS} tells us that $(\ulcorner
q_\Sigma t \urcorner, a_\Sigma) = (q_\Sigma \llcorner t \lrcorner, a)
\in \alpha_S$ since $\llcorner t \lrcorner$ is both complete and in
$\mathsf{ovw}(t) = T$, and hence $S$. If $p$ is a move in $A$ then
$(\ulcorner q_\Sigma t' \urcorner, p) = (q_\Sigma \llcorner t'
\lrcorner, p) \in \alpha_S$ since $\llcorner t' \lrcorner p =
\llcorner t' p \lrcorner$ is in $\mathsf{ovw}(t) = T$ (and hence $S$).
Hence $q_\Sigma t a_\Sigma \in \alpha_S$ after all, giving us the
interaction sequence witness yielding $\sigma;\alpha_S = \top$. \qed
\end{proof}

\subsection{Full Abstraction}

In order to show that $\sigma =_{ib} \tau$ iff $\mathsf{ovw}(\sigma) =
\mathsf{ovw}(\tau)$, we first show an inequational version. The
observational preorder does not correspond to the subset ordering;
instead it corresponds to the following ordering:

\begin{definition}
Suppose $\sigma$ and $\tau$ are sets of \textsf{O}-deterministic sets
of \textsf{O}-views over an arena $A$. Write $\sigma \leq_{os} \tau$
if $\forall S \in \sigma \exists T \in \tau$ with $T \subseteq S$.
\end{definition}

\noindent It is clear that $\leq_{os}$ is a preorder.

\begin{proposition}
$\sigma \leq_{ib} \tau$ if and only if $\mathsf{obs}(\sigma) \leq_{os}
  \mathsf{obs}(\tau)$
\label{inequational}
\end{proposition}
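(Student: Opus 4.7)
The plan is to prove this inequational characterisation by chaining together the two earlier propositions \ref{obseqv} and \ref{topp}, which together give a usable handle on the preorder $\leq_{ib}$ in terms of tests of the form $\alpha_S$ and the $\supseteq$-relationship between $S$ and elements of $\mathsf{obs}(\sigma)$. Both directions then follow by a short calculation; the main content has already been packaged into those two propositions, and what remains is essentially bookkeeping.

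For the forward direction, I would assume $\sigma \leq_{ib} \tau$ and fix an arbitrary $S \in \mathsf{obs}(\sigma)$. By Proposition \ref{Odet} the set $S$ is O-deterministic, so $\alpha_S$ is defined. Since $S \supseteq S$ and $S \in \mathsf{obs}(\sigma)$, Proposition \ref{topp} gives $\sigma;\alpha_S = \top$. Then $\sigma \leq_{ib} \tau$ (applied to $\alpha_S$, which is innocent) yields $\tau;\alpha_S = \top$, and a second appeal to Proposition \ref{topp} produces some $T \in \mathsf{obs}(\tau)$ with $T \subseteq S$, exhibiting the required witness for $\mathsf{obs}(\sigma) \leq_{os} \mathsf{obs}(\tau)$.

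For the converse, I would assume $\mathsf{obs}(\sigma) \leq_{os} \mathsf{obs}(\tau)$ and use Proposition \ref{obseqv} to reduce showing $\sigma \leq_{ib} \tau$ to the statement that $\sigma;\alpha_S = \top$ implies $\tau;\alpha_S = \top$ for every O-deterministic $S$ on $A$. Given such an $S$ with $\sigma;\alpha_S = \top$, Proposition \ref{topp} supplies $S' \in \mathsf{obs}(\sigma)$ with $S' \subseteq S$; the hypothesis then gives $T \in \mathsf{obs}(\tau)$ with $T \subseteq S'$, and by transitivity $T \subseteq S$, so Proposition \ref{topp} in its reverse direction yields $\tau;\alpha_S = \top$.

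There isn't really a hard step here: the subtlety has been absorbed by Proposition \ref{topp}, whose $\supseteq$-shape is precisely what makes $\leq_{os}$ (rather than $\subseteq$) the correct ordering on $\mathsf{obs}$-sets. The only thing to be a little careful about is ensuring that the $S \in \mathsf{obs}(\sigma)$ we pick in the forward direction really is O-deterministic so that $\alpha_S$ makes sense, but this is exactly Proposition \ref{Odet} applied to the single-threaded, O-innocent, complete play witnessing membership in $\mathsf{obs}(\sigma)$.
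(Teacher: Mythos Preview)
Your proposal is correct and follows essentially the same route as the paper: both directions chain Propositions \ref{obseqv} and \ref{topp} in exactly the way you describe, with the forward direction testing against $\alpha_S$ for $S\in\mathsf{obs}(\sigma)$ and the converse reducing via \ref{obseqv} to tests $\alpha_S$ and then using \ref{topp} twice with transitivity of $\subseteq$. Your explicit appeal to Proposition \ref{Odet} to justify that elements of $\mathsf{obs}(\sigma)$ are O-deterministic is a small clarification the paper leaves implicit, but otherwise the arguments are the same.
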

\begin{proof}
Suppose $\sigma \leq_{ib} \tau$ and $S \in \mathsf{obs}(\sigma)$. Then
by \ref{topp}, $\sigma;\alpha_S = \top$. Then by assumption
$\tau;\alpha_S = \top$. Then by \ref{topp}, $S \supseteq T \in
\mathsf{obs}(\tau)$.

Conversely, suppose $\sigma;\alpha_S = \top$ for some
\textsf{O}-deterministic set $S$ (invoking \ref{obseqv}). So $S
\supseteq T \in \mathsf{obs}(\sigma)$ by \ref{topp}. Then since
$\mathsf{obs}(\sigma) \leq_{os} \mathsf{obs}(\tau)$, $T \supseteq R
\in \mathsf{obs}(\tau)$. So $S \supseteq R \in \mathsf{obs}(\tau)$. So
by \ref{topp} $\tau;\alpha_S = \top$, as required. \qed
\end{proof}

We have now shown that $\sigma =_{ib} \tau$ iff $\mathsf{ovw}(\sigma)
=_{os} \mathsf{ovw}(\tau)$. We shall now show that this is equality of
\textsf{O}-sets by showing that $\leq_{os}$ is antisymmetric for the
kind of sets we are dealing with.

We note that $\leq_{os}$ is \emph{not} antisymmetric on general sets
of \textsf{O}-deterministic sets. Let $\sigma = \{ \{ q_2 q_1 5_1, q_2
0_2 \}, \{ q_2 0_2 \} \}$ and $\tau = \{ \{ q_2 0_2 \} \}$ on the
arena $\mathbb{N} \rightarrow \mathbb{N}$. Each set in $\sigma$ and
$\tau$ are \textsf{O}-deterministic, and we have $\sigma =_{os} \tau$
with $\sigma \neq \tau$. However the strategy $\sigma$ does not come
from any innocent strategy, since the strategy would have to both
query and not query its argument. Thus we need to put further
restrictions on these sets of \textsf{O}-deterministic sets regarding
how the \textsf{O}-deterministic sets can interact with each other ---
a condition of determinacy.

\begin{definition}
An \emph{observational strategy} on $A$ consists of a set $\sigma$ of
\textsf{O}-deterministic sets over $A$ such that if $S, T \in \sigma$
with $S \neq T$ then there exists a play $t$ and \textsf{O}-moves
$o_1, o_2$ with $o_1 \neq o_2$ such that $to_1 \in S$ and $to_2 \in
T$.
\end{definition}

\noindent This says that if two \textsf{O}-deterministic sets differ,
then they first differ at an \textsf{O}-move.

\begin{proposition}
For each innocent strategy $\sigma$, $\mathsf{obs}(\sigma)$ is an
observational strategy.
\label{obstrat}
\end{proposition}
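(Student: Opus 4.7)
The plan is to pull back to the underlying plays. For distinct $S, T \in \mathsf{obs}(\sigma)$, write $S = \mathsf{ovw}(s)$ and $T = \mathsf{ovw}(t)$ for complete, single-threaded, O-innocent $s, t \in \sigma$; necessarily $s \neq t$. Complete single-threaded plays in PCF-style arenas are maximal (after the initial question has been answered there are no further legal O-moves in the thread), so neither $s$ nor $t$ is a proper prefix of the other, and there is a longest common prefix $u$ with $s = u x s''$ and $t = u y t''$ for some moves $x \neq y$. Setting $w = \llcorner u \lrcorner$, the O-view clauses give $\llcorner u x \lrcorner = w x$ and $\llcorner u y \lrcorner = w y$ whenever $x$ and $y$ are O-moves, so $w x \in S$ and $w y \in T$ with $x \neq y$ — exactly the observational-strategy condition.

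The main step, and what I expect to be the principal obstacle, is establishing that $x$ and $y$ must be O-moves. Because $u$ is a common prefix of two alternating plays, $x$ and $y$ share polarity. If that polarity were P, then $u$ would end in an O-move, and both $ux, uy \in \sigma$ would extend $u$ by a P-move. Innocence of $\sigma$ — the P-response is uniquely determined by the P-view $\ulcorner u \urcorner$ — would then force $x = y$, contradicting the choice of $u$ as the longest common prefix. Hence $x$ and $y$ are both O-moves and the witness above goes through.

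A subtlety worth flagging is the boundary case in which one of $s, t$ is a strict prefix of the other (say $s \sqsubsetneq t$): the ``first difference'' is then one-sided, with $t$ extending $s$ by an O-move $o$ but $S$ containing no matching extension of $\llcorner s \lrcorner$. This case is precluded in the intended PCF setting by the maximality of complete single-threaded plays invoked above, so the argument stands; a fully general version of the statement would require either this arena-specific hypothesis or a tightening of the notion of ``complete'' to ``maximal''.
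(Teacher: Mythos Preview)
Your argument is correct and follows essentially the same line as the paper: pull back $S,T$ to witnessing plays $s,t\in\sigma$, locate the first point of divergence, argue it must be an \textsf{O}-move by determinacy/innocence of $\sigma$, and take \textsf{O}-views to produce the required $wo_1\in S$, $wo_2\in T$. You are in fact more careful than the paper on one point: the paper simply asserts that $s$ and $t$ ``first differ at an \textsf{O}-move by the determinacy condition on strategies'' without addressing the possibility that one is a strict prefix of the other, whereas you explicitly rule this out via maximality of complete single-threaded plays in PCF-style arenas.
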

\begin{proof}
Suppose $\mathsf{ovw}(s) \neq \mathsf{ovw}(t)$. Then it follows that
$s \neq t$. Since $s, t \in \sigma$ they must first differ at an
\textsf{O}-move by the determinacy condition on strategies. Thus $ro_1
\sqsubseteq s$, $ro_2 \sqsubseteq t$ for $o_1 \neq o_2$. Then
$\llcorner ro_1 \lrcorner = \llcorner r \lrcorner o_1 \in
\mathsf{ovw}(s)$ and $\llcorner ro_2 \lrcorner = \llcorner r \lrcorner
o_2 \in \mathsf{ovw}(t)$ with $o_1 \neq o_2$ as required. \qed
\end{proof}

\begin{proposition} 
If $\sigma$ is an observational strategy, $S, T \in \sigma$ with $S
\subseteq T$ then $S = T$.
\label{neq}
\end{proposition}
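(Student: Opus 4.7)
The plan is to argue by contradiction, playing the observational-strategy condition against the \textsf{O}-determinacy of the individual sets. Suppose $S \subseteq T$ but $S \neq T$. Then the observational strategy hypothesis supplies a play $t$ together with distinct \textsf{O}-moves $o_1 \neq o_2$ such that $to_1 \in S$ and $to_2 \in T$.

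Now I would use the inclusion $S \subseteq T$ to transport $to_1$ into $T$, giving $to_1, to_2 \in T$ with $o_1 \neq o_2$. But $T$ is itself \textsf{O}-deterministic by the definition of observational strategy, so $to_1, to_2 \in T$ forces $o_1 = o_2$. This contradicts the choice of $o_1, o_2$, so $S = T$.

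I do not anticipate any real obstacle here: the whole point of the definition of observational strategy is precisely to rule out the pathological example displayed just before the statement, where one \textsf{O}-deterministic set strictly contained another. The proof is essentially a one-liner combining the ``first difference is at an \textsf{O}-move'' clause with the determinacy of a single \textsf{O}-deterministic set on the side of the larger element $T$.
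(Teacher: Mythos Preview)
Your proof is correct and essentially identical to the paper's own argument: assume $S\neq T$, use the observational-strategy condition to obtain $to_1\in S$, $to_2\in T$ with $o_1\neq o_2$, transport $to_1$ into $T$ via $S\subseteq T$, and contradict \textsf{O}-determinacy of $T$.
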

\begin{proof}
Suppose $S \subseteq T$ and for contradiction that $S \neq T$. Then
there exists $t, o_1, o_2$ with $to_1 \in S$, $to_2 \in T$ and $o_1
\neq o_2$. But then $to_1 \in T$ since $S \subseteq T$. Thus $to_1,
to_2 \in T$ with $o_1 \neq o_2$. This contradicts
\textsf{O}-determinacy of $T$. \qed
\end{proof}

\noindent From this it is simple to show that $\leq_{os}$ is
antisymmetric:

\begin{proposition}
Let $\sigma$ and $\tau$ be observational strategies such that $\sigma
\leq_{os} \tau$ and $\tau \leq_{os} \sigma$. Then $\tau = \sigma$.
\label{antisym}
\end{proposition}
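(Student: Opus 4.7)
The plan is to prove the two inclusions $\sigma \subseteq \tau$ and $\tau \subseteq \sigma$ directly, rather than to argue with $\leq_{os}$ in the abstract. The key insight is that Proposition \ref{neq} upgrades inclusions inside a single observational strategy to equalities, so a sandwich argument will collapse the three-step chain one naturally obtains from the two hypotheses.

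Concretely, I would pick an arbitrary $S \in \sigma$ and show $S \in \tau$. First, applying $\sigma \leq_{os} \tau$ to $S$ yields some $T \in \tau$ with $T \subseteq S$. Then applying $\tau \leq_{os} \sigma$ to this $T$ yields some $S' \in \sigma$ with $S' \subseteq T$. Composing gives $S' \subseteq T \subseteq S$ with $S, S' \in \sigma$. Since $\sigma$ is an observational strategy, Proposition \ref{neq} forces $S' = S$, and then the sandwich forces $S = T$, whence $S \in \tau$. This establishes $\sigma \subseteq \tau$, and the same argument with the roles of $\sigma$ and $\tau$ swapped (using that $\tau$ is also an observational strategy) gives $\tau \subseteq \sigma$.

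There is no real obstacle: the only nontrivial ingredient is Proposition \ref{neq}, and the proof is essentially a two-line chase once one notices that the definition of $\leq_{os}$ can be applied twice to produce a witness on the same side as the starting element. The argument would not go through on arbitrary sets of \textsf{O}-deterministic sets (as noted by the counterexample preceding the definition of observational strategy), which is exactly why Proposition \ref{neq}, and hence the defining closure condition on observational strategies, is the crucial hypothesis.
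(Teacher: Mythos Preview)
Your argument is correct and is essentially identical to the paper's own proof: pick $S \in \sigma$, apply the two $\leq_{os}$ hypotheses in turn to obtain $S' \subseteq T \subseteq S$ with $S,S' \in \sigma$ and $T \in \tau$, then use Proposition~\ref{neq} to collapse the chain. The only cosmetic difference is that the paper states one inclusion and appeals to symmetry (``wlog''), while you spell out that both directions follow the same pattern.
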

\begin{proof}
It will of course suffice to show wlog that $\sigma \subseteq
\tau$. Let $S \in \sigma$. Then since $\sigma \leq_{os} \tau$ we have
$T \subseteq S$ with $T \in \tau$. Then since $\tau \leq_{os} \sigma$
we have $S' \subseteq T$ with $S' \in \sigma$. Then $S' \subseteq S$
with both in $\sigma$ so it follows by \ref{neq} that $S = S'$. Since
$S \subseteq T \subseteq S$ it follows that $S = T$, i.e. $S \in \tau$
as required. \qed
\end{proof}

\noindent We can now show our main result.

\begin{theorem}
Two innocent strategies $\sigma$ and $\tau$ are observationally
equivalent if and only if $\mathsf{obs}(\sigma) = \mathsf{obs}(\tau)$.
\label{eqresult}
\end{theorem}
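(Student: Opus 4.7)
The plan is to combine the inequational version (Proposition \ref{inequational}) with antisymmetry of $\leq_{os}$ on observational strategies (Proposition \ref{antisym}), using Proposition \ref{obstrat} to guarantee that we are in the right class of sets for antisymmetry to apply.

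First, I would unpack observational equivalence into its two directions: $\sigma =_{ib} \tau$ means $\sigma \leq_{ib} \tau$ and $\tau \leq_{ib} \sigma$. For the easy direction, I would assume $\mathsf{obs}(\sigma) = \mathsf{obs}(\tau)$, note that equality trivially yields $\leq_{os}$ in both directions, and then invoke Proposition \ref{inequational} to conclude $\sigma \leq_{ib} \tau$ and $\tau \leq_{ib} \sigma$, giving $\sigma =_{ib} \tau$.

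For the harder direction, I would assume $\sigma =_{ib} \tau$ and apply Proposition \ref{inequational} to both orderings, obtaining $\mathsf{obs}(\sigma) \leq_{os} \mathsf{obs}(\tau)$ and $\mathsf{obs}(\tau) \leq_{os} \mathsf{obs}(\sigma)$. Since $\leq_{os}$ is not antisymmetric on arbitrary sets of \textsf{O}-deterministic sets (as shown by the counterexample preceding the definition of observational strategy), I would next appeal to Proposition \ref{obstrat} to ensure that both $\mathsf{obs}(\sigma)$ and $\mathsf{obs}(\tau)$ are observational strategies, and then apply Proposition \ref{antisym} to conclude $\mathsf{obs}(\sigma) = \mathsf{obs}(\tau)$.

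Since all of the heavy lifting has already been carried out in the preceding propositions, there is no real obstacle here: the theorem is essentially a direct assembly of \ref{inequational}, \ref{obstrat}, and \ref{antisym}. The only substantive check is to make sure one remembers to invoke \ref{obstrat} before applying \ref{antisym}, since without the observational strategy hypothesis the antisymmetry step fails.
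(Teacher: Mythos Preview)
Your proposal is correct and follows exactly the paper's own proof: both directions are handled by unpacking $=_{ib}$ into the two inequalities, invoking Proposition~\ref{inequational} to pass between $\leq_{ib}$ and $\leq_{os}$, and in the forward direction using Proposition~\ref{obstrat} to justify applying the antisymmetry result Proposition~\ref{antisym}. There is nothing to add.
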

\begin{proof}
Suppose $\sigma =_{ib} \tau$. Then $\sigma \leq_{ib} \tau$ and $\tau
\leq_{ib} \sigma$. Then by \ref{inequational}, $\mathsf{obs}(\sigma)
\leq_{os} \mathsf{obs}(\tau)$ and $\mathsf{obs}(\tau) \leq_{os}
\mathsf{obs}(\sigma)$. But by \ref{obstrat} both
$\mathsf{obs}(\sigma)$ and $\mathsf{obs}(\tau)$ are observational
strategies. Thus by \ref{antisym} we have $\mathsf{obs}(\sigma) =
\mathsf{obs}(\tau)$.

Conversely if $\mathsf{obs}(\sigma) = \mathsf{obs}(\tau)$ then
$\mathsf{obs}(\sigma) \leq_{os} \mathsf{obs}(\tau)$ and
$\mathsf{obs}(\sigma) \leq_{os} \mathsf{obs}(\tau)$ since $\leq_{os}$
is clearly reflexive. Then $\sigma \leq_{ib} \tau$ and $\tau \leq_{ib}
\sigma$ by \ref{inequational} so $\sigma =_{ib} \tau$ as
required. \qed
\end{proof}

To return to our example, it is easy to see that
$\mathsf{obs}(\mathsf{add_{LR}}) = \mathsf{obs}(\mathsf{add_{RL}})$ ---
and the same result is obtained if we consider $\mathsf{obs}$ of any
other $\mathsf{add}$ strategy (e.g. interrogation of arguments
multiple times). We precisely forget repetition and ordering in this
construction, and thus only represent ``purely functional'' behaviour.

\section{A Fully Abstract Model?}


\noindent We may use the above result to formulate a fully abstract
model for PCF.

\begin{definition}
  We define the category $\mathcal{OBS_L}$. Objects of
  $\mathcal{OBS_L}$ are games. An arrow $s : A \rightarrow B$ is a set
  of sets of \textsf{O}-views of plays over the game $A \Rightarrow B$
  such that $\sigma = \mathsf{obs}(\tau)$ for some innocent strategy
  $\sigma_s : A \Rightarrow B$. The identity $e$ for an object $A$ is
  given by $\mathsf{obs}(\mathsf{id_A})$ where $\mathsf{id}_A$ is the
  copycat strategy on the game $A$. If $s : A \rightarrow B$ and $t :
  B \rightarrow C$, we define composition $s;t$ as the observational
  strategy given by $\mathsf{obs}(\sigma_s; \sigma_t)$.
\end{definition}

\noindent We can show that composition in $\mathcal{OBS_L}$ is
well-defined via the following proposition, following from
\ref{eqresult} and results in \cite{AMc_GS}.

\begin{proposition}
  If $\sigma_1, \sigma_2 : A \rightarrow B, \tau : B \rightarrow C$
  are innocent strategies with $\sigma_1 =_{ib} \sigma_2$ then
  $\sigma_1;\tau =_{ib} \sigma_2;\tau$. Similarly if $\sigma : A
  \rightarrow B, \tau_1, \tau_2 : B \rightarrow C$ with $\tau_1 =_{ib}
  \tau_2$ then $\sigma;\tau_1 =_{ib} \sigma;\tau_2$.
\end{proposition}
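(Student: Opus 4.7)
The plan is to unfold the definition of $=_{ib}$ and, for each direction, reduce distinguishability of $\sigma_1;\tau$ (resp.\ $\sigma;\tau_1$) by some innocent test $\alpha$ to distinguishability of $\sigma_1$ (resp.\ $\tau_1$) by a freshly-built innocent test. The reduction uses only two ingredients from \cite{AMc_GS}: associativity of composition, and the closure of innocent strategies under composition. It suffices to prove the corresponding $\leq_{ib}$ versions of each part, since the full $=_{ib}$ statements then follow by symmetry.

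For the first (left) congruence, assume $\sigma_1 \leq_{ib} \sigma_2$ and fix an innocent test $\alpha : (A \Rightarrow C) \rightarrow \Sigma$. The strategy $\tau : B \Rightarrow C$ gives rise, via post-composition, to an innocent morphism $\tau_{*} : (A \Rightarrow B) \rightarrow (A \Rightarrow C)$ internal to the closed category of innocent strategies, with $\sigma_i ; \tau = \sigma_i ; \tau_{*}$ when the left-hand side is viewed as an element of $A \Rightarrow C$. Then $\tau_{*} ; \alpha : (A \Rightarrow B) \rightarrow \Sigma$ is itself an innocent test, and by associativity of composition $(\sigma_i ; \tau) ; \alpha = \sigma_i ; (\tau_{*} ; \alpha)$ for $i = 1,2$. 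Applying $\sigma_1 \leq_{ib} \sigma_2$ to this new test converts $(\sigma_1 ; \tau) ; \alpha = \top$ into $(\sigma_2 ; \tau) ; \alpha = \top$, as required.

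The second (right) congruence is entirely symmetric. Given $\tau_1 \leq_{ib} \tau_2$ and an innocent test $\alpha$, the strategy $\sigma$ gives rise by pre-composition to an innocent morphism $\sigma^{*} : (B \Rightarrow C) \rightarrow (A \Rightarrow C)$ satisfying $\sigma ; \tau_i = \tau_i ; \sigma^{*}$. Then $\sigma^{*} ; \alpha : (B \Rightarrow C) \rightarrow \Sigma$ is an innocent test on $B \Rightarrow C$, and associativity yields $(\sigma ; \tau_i) ; \alpha = \tau_i ; (\sigma^{*} ; \alpha)$. Applying $\tau_1 \leq_{ib} \tau_2$ to $\sigma^{*} ; \alpha$ closes the argument.

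The only non-trivial step is the bookkeeping identification $\sigma_i ; \tau = \sigma_i ; \tau_{*}$ and $\sigma ; \tau_i = \tau_i ; \sigma^{*}$, i.e.\ isolating the pre- and post-composition operators as innocent strategies in the closed category and checking that the resulting associativity equations match the game-semantic composition of interaction sequences. These are standard ingredients of the categorical structure of innocent strategies recorded in \cite{AMc_GS}; once in hand, everything else is a direct application of the hypothesis $\sigma_1 \leq_{ib} \sigma_2$ (resp.\ $\tau_1 \leq_{ib} \tau_2$) to a suitably constructed innocent test.
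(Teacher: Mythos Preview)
Your argument is correct and is precisely the kind of elaboration the paper leaves implicit: the paper does not give a detailed proof of this proposition, merely remarking that it follows ``from \ref{eqresult} and results in \cite{AMc_GS}''. Your proof unpacks the latter half of that remark, using exactly the cartesian closed structure of the category of innocent strategies (the functorial actions $A \Rightarrow \tau$ and $\sigma \Rightarrow C$, the naturality of currying, and associativity of composition) recorded in \cite{AMc_GS}. The identities $\ulcorner \sigma_i;\tau\urcorner = \ulcorner\sigma_i\urcorner;\tau_*$ and $\ulcorner\sigma;\tau_i\urcorner = \ulcorner\tau_i\urcorner;\sigma^*$ are the standard CCC facts $\Lambda(f);(A\Rightarrow g)=\Lambda(f;g)$ and $\Lambda(g);(\sigma\Rightarrow C)=\Lambda((\mathrm{id}\times\sigma);g)$, so the ``bookkeeping'' you flag is indeed routine.

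One small observation: your proof does not invoke Theorem~\ref{eqresult} at all, whereas the paper's one-line justification mentions it. On closer reading, the reference to \ref{eqresult} seems to concern the \emph{application} of the proposition (showing composition in $\mathcal{OBS_L}$ is well-defined requires translating between equality of $\mathsf{obs}(-)$ and $=_{ib}$, which is exactly \ref{eqresult}) rather than the proof of the proposition itself. So your omission is not a gap; if anything it clarifies that the congruence of $=_{ib}$ under composition is a purely categorical fact about innocent strategies and needs none of the paper's new machinery.
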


We can see that $\mathcal{OBS_L}$ is indeed a category by appealing to
associativity and identity in the category $\mathcal{C}_{inn}$. We can
then give a denotation of PCF in this category --- the denotation of
types are the same as that for the game semantic model, and the
denotation of a term $S$ is given by $\mathsf{obs}(\llbracket S
\rrbracket_{ib})$.

The above treatment gives a concrete fully abstract ``model'' of PCF,
but it doesn't give us any extra information about how the terms of
PCF look denotationally. In particular it would be good to define
precisely which observational strategies come from an innocent
strategy, and to define their composition directly --- this would
explicitly yield a categorical model which is full abstract for
PCF. This seems possible, but many details need checking; this is left
for future work.

\bibliographystyle{alpha}
\bibliography{../puregames/mybib.bib}

\end{document}